\newcommand{\qed}{\hfill$\Box$}
\newcommand{\cU}{{\cal U}}
\newcommand{\cF}{{\cal F}}
\newcommand{\cG}{{\cal G}}
\newcommand{\cA}{{\cal A}}
\newcommand{\cV}{{\cal V}}
\newcommand{\cE}{{\cal E}}
\newenvironment{proof}{\noindent {\bf Proof.}}{\qed}
\newtheorem{theorem}{Theorem}[section]
\newtheorem{lemma}{Lemma}[section]
\begin{document}

\baselineskip 0.2in
\parskip      0.1in
\parindent    0em

\bibliographystyle{plain}

\title{{\bf Explorable Families of Graphs} }

\author{
Andrzej Pelc  \footnotemark[1]
}

\date{ }
\maketitle
\def\thefootnote{\fnsymbol{footnote}}

\footnotetext[1]{
\noindent
D\'{e}partement d'informatique, Universit\'{e} du Qu\'{e}bec en Outaouais,
Gatineau, Qu\'{e}bec J8X 3X7,
 Canada. {\tt pelc@uqo.ca}.
Research supported in part by NSERC  Discovery Grant 8136 -- 2013  and by the
Research Chair in Distributed Computing of the
Universit\'{e} du Qu\'{e}bec en Outaouais.
}

\begin{abstract}
Graph exploration is one of the fundamental tasks performed by a mobile agent in a graph. An $n$-node graph has unlabeled nodes,
and all ports at any node of degree $d$ are arbitrarily numbered $0,\dots, d-1$. A mobile agent, initially
situated at some starting node $v$, has to visit all nodes of the graph and stop. In the absence of any initial knowledge of the graph
the task of deterministic exploration is often impossible. On the other hand, for some families of graphs it is possible to design deterministic exploration algorithms
working for any graph of the family. We call such families of graphs {\em explorable}. Examples of explorable families are all finite families of graphs, as well as 
the family of all trees. 

In this paper we study the problem of which families of graphs are explorable. We characterize all such families, and then ask the question whether there exists
a universal deterministic algorithm that, given an explorable family of graphs, explores any graph of this family, without knowing which graph of the family is being explored.
The answer to this question turns out to depend on how the explorable family is given to the hypothetical universal algorithm. If the algorithm can get the answer to any yes/no
question about the family, then such a universal algorithm can be constructed. If, on the other hand, the algorithm can be only given an algorithmic description of the
input explorable family, then such a universal deterministic algorithm does not exist. 

\vspace*{0.5cm}

{\bf Keywords:} algorithm, graph, exploration, mobile agent, explorable family of graphs

\vspace*{5cm}

\end{abstract}

\pagebreak

\section{Introduction}

Network exploration is one of the fundamental tasks performed by mobile agents in networks.
Depending on the application, the mobile agent may be 
a software agent that has to collect data placed at nodes of a communication network, or it may be a mobile robot
collecting samples of ground in a contaminated building or mine whose corridors form links of a network, with
corridor crossings represented by  nodes.

The network is modeled as a finite simple connected
undirected graph $G=(V,E)$ with $n$ nodes, called {\em graph} in the sequel. The number $n$ of nodes is called the {\em size} of the graph.
Nodes are unlabeled, and all ports at any node of degree $d$ are arbitrarily numbered $0,\dots, d-1$.
The agent is initially situated at a starting node $v$ of the graph. When the agent located at a current node $u$ gets to a neighbor $w$ of $u$
by taking a port $i$, it learns the port $j$ by which it enters node $w$ and it learns the degree of $w$.
The agent has to visit all nodes of the graph and stop. We assume that the agent is computationally unbounded (it is modeled as a Turing machine) and cannot mark the visited nodes.

It is well-known that,  without any a priori information, the task of deterministic exploration is impossible to perform in arbitrary graphs. In fact, it is impossible even in quite simple and restricted families of graphs, such as the 
class of rings in which ports at all nodes are numbered 0,1 in clockwise order. Even if the agent knows that it is in some such ring (but does not know which), it cannot learn its size. 
If there existed a deterministic exploration algorithm for the class of such rings,  not using any a priori knowledge,
then the agent would have to stop after some $t$ steps in every ring, and hence it would fail to explore a $(t+2)$-node ring.

On the other hand, there exist classes of graphs for which deterministic exploration of any graph in the class is possible without knowing in which graph of the family the agent operates. Such are, for example, all finite classes of graphs.
Knowing such a class, the algorithm can find an upper bound $N$ on the size of all graphs in the family, and then apply , e.g., the algorithm from \cite{Re}, based on universal exploration sequences, that
visits all nodes of any graph of size at most $N$, regardless of the starting node.  On the other hand, there are also infinite families of graphs that are possible to explore without any initial information.
Such is, for example, the family of all trees. Indeed, any tree can be visited using the {\em basic walk} that consists in leaving the starting node by port 0 and then leaving every node $w$ by port $(i+1) \mod d$,
where $i$ is the port by which the agent entered node $w$ and $d$ is the degree of $w$. Performing such a walk, the agent realizes when it made the full tour of the tree and got back to the starting node, where it stops.

The aim of this paper is to study families of graphs that have the property that any graph in the family can be deterministically explored without knowing in which graph the agent operates. We adopt the following definition.

\begin{quotation}
A family $\cal F$ of graphs is {\em explorable}, if there exists a deterministic algorithm $\cal A(\cF)$
 dedicated to this family, such that a mobile agent that executes algorithm $\cA (\cF)$ starting from any node $v$ of any graph $G\in \cal F$, visits all nodes of $G$ and stops.
\end{quotation}

{\bf Our results.}
We give an exact characterization of explorable families of graphs by formulating a condition $C$ with the following properties. Given a family $\cal F$ of graphs that does not satisfy condition $C$,
 no deterministic algorithm can explore all graphs of $\cal F$. On the other hand, given any family $\cal F$ of graphs satisfying condition $C$ we construct a deterministic algorithm $\cal A(\cal F)$
that explores all graphs of $\cal F$. 

The above algorithm $\cal A(\cal F)$ used to explore graphs of a family $\cal F$ that has property $C$ is {\em dedicated} to the family $\cal F$, i.e., it works only for graphs from $\cal F$, and for different
families $\cal F$ different  algorithms $\cal A(\cal F)$ are used. Hence it is natural to ask if there exists
a {\em universal} deterministic algorithm $\cal U$ that, given an explorable family $\cal F$ of graphs (i.e.. any family satisfying condition $C$), explores any graph of this family, without knowing which graph of the family it is exploring.
The answer to this question turns out to depend on how the explorable family $\cal F$ is given to the hypothetical universal algorithm. (Since interesting explorable families are infinite, the input of $\cal U$ cannot be given as a finite object all at once, e.g., coded as a finite binary string). If the universal algorithm can get the answer to any yes/no
question about the family, then such a universal deterministic algorithm $\cal U$ can be constructed. If, on the other hand, the universal algorithm can be only given an algorithmic description of the
input explorable family, then such a universal deterministic algorithm does not exist.

{\bf Related work.}
Exploration of unknown environments by mobile agents
has been studied for many decades (cf. the survey \cite{RKSI}).
The explored environment can be modeled in two distinct ways:
either as a subset of the plane, e.g., an
unknown terrain with convex obstacles \cite{BRS},
or a room with polygonal \cite{DKP} or rectangular \cite{BBFY} obstacles, or
as a graph, assuming that the agent may only move along its edges. The graph
model can be further split into two different scenarios.
One of them assumes that the graph is directed, in which case the agent can move only from
tail to head of a directed edge  \cite{AH,BFRSV,BS}. The other scenario assumes that the graph is undirected 
and the agent can traverse edges in both directions  \cite{ABRS,BRS2,DKK,PaPe}.
Some authors impose further restrictions on the moves of the agent.
In  \cite{ABRS,BRS2} it is assumed that the agent has a restricted tank,
and thus has to periodically return to the base for refueling, while the authors of \cite{DKK}  assume that the agent is attached to the
base by a cable of restricted length.

An important direction of research concerns
exploration of anonymous graphs.
In this case it is impossible to perform exploration with termination of
arbitrary graphs in the absence of any a priori knowledge of the graph, if no marking of nodes is allowed.
Hence some authors \cite{BFRSV,BS}
allow {\em pebbles} which the agent can drop on nodes to recognize already visited ones, and
then remove them and drop them in other nodes. A more restrictive scenario assumes that 
a stationary token is placed at the starting node of the agent \cite{CDK,PeTi}.
Exploring
anonymous graphs without the possibility of marking nodes (and thus possibly without stopping)
is investigated, e.g., in \cite{DFKP,FI}.
In these papers the authors
concentrate attention on the minimum amount of memory sufficient
to carry out exploration. If marking of nodes is precluded, some knowledge about the graph is required in order to guarantee stopping after exploration,
e.g., an upper bound on its size \cite{AKLLR,CDK,Re}.

In \cite{FIP2,GP}, the authors study the problem of the minimum size of information that has to be given to the mobile agent in order to perform fast exploration.
In  \cite{FIP2}, only exploration of trees is considered, and the algorithm performance is measured using the competitive approach.
In \cite{GP}, exploration of arbitrary graphs is studied, and the performance measure is the order of magnitude of exploration time.

\section{Characterization of explorable families}

In this section we provide a necessary and sufficient condition on the explorability of a family of graphs. From now on, we restrict attention to infinite families of graphs, since, as mentioned in the introduction, all finite families are trivially explorable. In order to formulate the condition we need the notion of a {\em truncated view} from a node $v$ in a graph $G$. Let $G$ be any graph, $v$ a node in this graph and $k$ a natural number. The truncated view from $v$ in $G$ of depth $k$, denoted $\cV(v,G,k)$, is the tree of all simple paths of length at most $k$, starting from node $v$ and coded as sequences of port numbers, where the rooted tree structure is defined by the prefix relation of sequences. 
This definition is equivalent  to that from \cite{YK}. It follows from the above paper that the information that a mobile agent starting at node $v$ in graph $G$ can obtain after $k$ steps is ``included'' in
$\cV(v,G,k+1)$  in the following sense. For any deterministic algorithm $\cA$ that works in graphs $G$ and $G'$ and any mobile agents $A_1$ and $A_2$ executing this algorithm, starting, respectively, from
node $v$ in $G$ and from node $v'$ in $G'$, such that $\cV(v,G,k+1)=\cV(v',G',k+1)$, the behaviors of agents $A_1$ and $A_2$ during the first $k$ steps of these executions, i.e., their trajectories and possible decisions to stop, are identical.

Since the set of (finite) graphs is countable, every (infinite) family $\cF$ of graphs can be represented as a sequence $\{G_i : i\geq 1\}$, ordered so that no two graphs $G_i$ and $G_j$ are port-preserving isomorphic, and the sizes of the graphs are non-decreasing. In order to make the enumeration unambiguous, graphs of the same size are ordered lexicographically, using some fixed graph representation. The resulting ordering will be called {\em canonical} and used from now on. 

The following condition $C$ concerns a family $\cF=\{G_i : i\geq 1\}$ of graphs:

\begin{quotation}
For every $i\geq 1$ and every node $v$ in $G_i$, there exist positive integers $k,m$, such that $\cV(v,G_i,k)$ is different from truncated views
of depth $k$ from all nodes in all graphs $G_j$, for $j>m$.
\end{quotation}

We now proceed to the proof that $C$ is a necessary and sufficient condition on the explorabilty of $\cF$. We first prove the necessity.

\begin{lemma}\label{nec}
A family $\cF=\{G_i : i\geq 1\}$ that does not satisfy the condition $C$ is not explorable.
\end{lemma}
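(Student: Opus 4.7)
The plan is to proceed by contradiction, assuming that some deterministic algorithm $\cA$ explores every graph in $\cF$, and to produce a graph $G_j\in\cF$ on which $\cA$ fails. The proof hinges on two ingredients already available in the excerpt: the quoted property from \cite{YK}, stating that an agent's first $k$ steps (trajectory together with any stopping decision) depend only on the truncated view of depth $k+1$ from its starting node, and the fact that in the canonical enumeration the sizes $|V(G_j)|$ are non-decreasing and tend to infinity (so for every $N$ there are only finitely many $j$ with $|V(G_j)|\leq N$).

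First I would carefully write down the negation of condition $C$: there exist $i_0\geq 1$ and a node $v_0$ in $G_{i_0}$ such that for all positive integers $k$ and $m$ one can find an index $j>m$ and a node $v'$ in $G_j$ with $\cV(v',G_j,k)=\cV(v_0,G_{i_0},k)$. Next, since by assumption $\cA$ explores $G_{i_0}$ starting at $v_0$, the agent stops after some finite number $t$ of steps, having visited at most $t+1$ distinct nodes.

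I would then apply the negated condition $C$ with the particular choice $k:=t+1$ and $m$ chosen so large that every $G_j$ with $j>m$ has strictly more than $t+1$ nodes (such $m$ exists because graphs in the canonical enumeration have non-decreasing, unbounded sizes). This yields an index $j>m$ and a node $v'$ in $G_j$ with $\cV(v',G_j,t+1)=\cV(v_0,G_{i_0},t+1)$. Invoking the property from \cite{YK}, the agent running $\cA$ from $v'$ in $G_j$ reproduces exactly the first $t$ steps of the execution from $v_0$ in $G_{i_0}$, including the decision to stop at step $t$. Thus it halts after having visited at most $t+1$ nodes of $G_j$, but $|V(G_j)|>t+1$, contradicting the assumption that $\cA$ explores $G_j$.

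The only subtle point, and the place where I would be most careful, is the indexing of the view depth. The quoted fact gives agreement of the first $k$ steps when the depth-$(k+1)$ views agree, so I must use depth $k+1$ (here $t+1$) rather than $k$ when invoking negation of $C$; otherwise the argument misses by one step and does not rule out that the agent distinguishes the two executions precisely at the stopping step. Once that bookkeeping is right, the rest is a direct pigeonhole between the bounded number of nodes visited in $t$ steps and the unboundedly large graphs guaranteed by the canonical ordering.
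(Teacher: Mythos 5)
Your proof is correct and follows essentially the same route as the paper's: negate condition $C$ at the starting pair $(v_0,i_0)$, let $t$ be the number of steps before the agent stops there, instantiate the negation with depth $k=t+1$ and $m$ large enough that all later graphs exceed $t+1$ nodes, and use the view-equivalence property from \cite{YK} to force a premature stop in the larger graph. The off-by-one care you take with the view depth matches the paper's choice of $k=x+1$, and your justification that such an $m$ exists (finitely many non-isomorphic graphs of each size, hence unbounded sizes in the canonical enumeration) is a point the paper uses implicitly.
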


\begin{proof}
Suppose that the condition $C$ is not satisfied for a family $\cF=\{G_i : i\geq 1\}$ of graphs. This implies that there exists a positive integer $i$
and a node $v$ in  graph $G_i$, such that for all positive integers $k, m$, there exists an index $j(k,m)>m$ and a node $v(k,m)$ in the graph $G_{j(k,m)}$, satisfying the equality $\cV(v(k,m), G_{j(k,m)},k)=\cV(v, G_i,k)$.

Suppose, for a contradiction, that the family $\cF$ is explorable. Hence there exists an algorithm $\cA(\cF)$ that explores any graph of the family, starting from any node. Consider the execution of this algorithm by an agent $A_1$ on graph $G_i$, starting from node $v$. For some integer $x$, after $x$ steps, the agent explores the graph $G_i$ and stops. Take $k=x+1$ and any $m$ such that all graphs $G_j$, for $ j>m$, have sizes larger than $x+1$. Consider an index $j(k,m)>m$ and a node $v(k,m)$ in $G_{j(k,m)}$  for which $\cV(v(k,m), G_{j(k,m)},k)=\cV(v, G_i,k)$.
Now consider the execution of algorithm  $\cA(\cF)$ by an agent $A_2$ starting at node $v(k,m)$ in $G_{j(k,m)}$. Since $\cV(v(k,m), G_{j(k,m)},k)=\cV(v, G_i,k)$, the behavior of this agent during the first $x$ steps must be the same as the behavior of agent $A_1$. It follows that agent $A_2$ stops after $x$ steps in the graph $G_{j(k,m)}$ as well. However, this graph has size larger than $x+1$ and hence it cannot be explored in $x$ steps. This contradiction proves the lemma.
\end{proof}

We next proceed to the proof of the sufficiency of condition $C$. In order to do this we will construct an algorithm $Explo(\cF)$, dedicated to a family of graphs $\cF=\{G_i : i\geq 1\}$ satisfying condition $C$, such that $Explo(\cF)$ explores all graphs in this family. For any 
$i\geq 1$ and every node $v$ in $G_i$, let $(k(v,i), m(v,i))$ be the lexicographically first couple of integers $(k,m)$, such that $\cV(v,G_i,k)$ is different from truncated views
of depth $k$ from all nodes in all graphs $G_j$, for $j>m$. Call the integer $k(v,i)$ the {\em depth witness} of $(v,i)$ and call the integer $m(v,i)$ the {\em range witness} of $(v,i)$.  
For every graph, we define a {\em non-backtracking} path as a path in which the agent
never exits a node by a port $p$ immediately after entering it by the port $p$. The {\em reverse} of a path $P=(w_1,w_2,\dots , w_s)$ is the path $\overline{P}=(w_s,w_{s-1}, \dots, w_1)$.

We first describe Procedure {\tt Check} $(v,i)$, for a positive integer $i$ and for a node $v$ in $G_i$. The aim of this procedure is to check
whether the truncated view of depth $k(i,v)$ from the unknown initial position of the agent in an unknown graph from $\cF$ is equal to the truncated view $\cV(v,G_i, (k(i,v))$. To this end, the agent traverses (in lexicographic order) all non-backtracking maximal paths from $\cV(v,G_i, (k(i,v))$   starting at its initial node and returning to it after traversing each path, using the reverse of this path. In the case when a given non-backtracking path 
is impossible to traverse, either because the agent enters earlier a node of degree 1, or because the entry port at a given edge is different from that in $\cV(v,G_i, (k(i,v))$, then the agent interrupts the traversal of this path and returns using the reverse of the path traversed till this point.

There are two possible terminations of 
Procedure {\tt Check} $(v,i)$. The first possibility is that all non-backtracking paths of length $k(v,i)$ are identical as in $\cV(v,G_i, (k(v,i))$. This is called the {\em success} of {\tt Check} $(v,i)$. The second possibility, called the {\em failure} of {\tt Check} $(v,i)$, is that the above condition is not satisfied.

The next procedure is used to find a couple $(v,i)$ for which {\tt Check} $(v,i)$ is terminated with success. Such a couple must exist: indeed,
if the agent starts from node $v$ in the graph $G_i$, then  {\tt Check} $(v,i)$ is terminated with success.

\begin{center}
\fbox{
\begin{minipage}{12cm}

{\bf Procedure} {\tt Find Success}\\

\vspace*{0.5cm}

$i:=1$\\
$result := failure$\\
{\bf while} $result = failure$ {\bf do}\\
\hspace*{1cm}{\bf for all} nodes $v$ in $G_i$ {\bf do}\\
\hspace*{2cm}{\tt Check}  $(v,i)$\\
\hspace*{2cm} {\bf if} {\tt Check}  $(v,i)$ terminated with success {\bf then}\\
\hspace*{3cm}$result := success$\\
\hspace*{1cm}$i:=i+1$\\
{\bf return} parameters $v$ and $i$ for which the variable $result$ changed from $failure$ to $success$.

\end{minipage}
}
\end{center}

Let $R(N)$ be the procedure from \cite{Re} based on universal exploration sequences, that
visits all nodes of any graph of size at most $N$, and stops, regardless of the starting node. Now the algorithm $Explo(\cF)$ can be succinctly formulated as follows.

\begin{center}
\fbox{
\begin{minipage}{12cm}

{\bf Algorithm} $Explo(\cF)$\\

\vspace*{0.5cm} 

{\tt Find Success}\\
$(v,i):=$ parameters returned by procedure {\tt Find Success}\\
$M:=$ the maximum size of all graphs in the family $\{G_t : t\leq m(v,i)\}$\\
$R(M)$

\end{minipage}
}
\end{center}

\begin{lemma}\label{suf}
For every family $\cF$ of graphs, satisfying condition $C$, Algorithm $Explo(\cF)$ correctly explores any graph of the family $\cF$, starting at any node of this graph.
\end{lemma}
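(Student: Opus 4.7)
The plan is to verify each stage of $Explo(\cF)$ on an arbitrary actual starting configuration. Fix a graph $G_{i_0} \in \cF$ and a node $v_0$ in $G_{i_0}$, and suppose the agent executes $Explo(\cF)$ starting at $v_0$. I would structure the argument in three steps, tracking what the agent does in the real graph while the algorithm reasons about the reference family $\cF$.

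First, I would argue that Procedure {\tt Find Success} terminates. Should the outer loop ever reach $i=i_0$, the inner loop eventually tests $(v,i)=(v_0,i_0)$. By construction, every non-backtracking path of length $k(v_0,i_0)$ predicted by $\cV(v_0,G_{i_0},k(v_0,i_0))$ is traversable in the actual graph $G_{i_0}$ from $v_0$ with matching entry ports, so {\tt Check}$(v_0,i_0)$ succeeds. Hence the procedure terminates, returning some pair $(v,i)$ with $i\leq i_0$. Moreover, each invocation of {\tt Check} returns the agent to its starting node via the reverses of the attempted paths, so on exit the agent is again located at $v_0$.

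Next I would use the success of the returned {\tt Check}$(v,i)$ to bound the actual graph's position in the canonical enumeration. Success of {\tt Check}$(v,i)$ certifies that $\cV(v_0,G_{i_0},k(v,i))=\cV(v,G_i,k(v,i))$. By the definition of the pair $(k(v,i),m(v,i))$, the tree $\cV(v,G_i,k(v,i))$ differs from every truncated view of depth $k(v,i)$ from any node of any graph $G_j$ with $j>m(v,i)$. Consequently $i_0\leq m(v,i)$, so $G_{i_0}$ lies in the finite set $\{G_t : t\leq m(v,i)\}$, and its size is at most $M$. The last step then invokes $R(M)$ from the current position $v_0$; by the property of $R$ from \cite{Re}, this subroutine visits every node of any graph of size at most $M$ regardless of the starting node, and stops. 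Hence it explores $G_{i_0}$ and halts.

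The delicate point is the implication used in the second step: the success of {\tt Check}$(v,i)$ must genuinely certify equality of the truncated views of depth $k(v,i)$, not merely pairwise agreement along the enumerated non-backtracking maximal paths. I would verify that the list of paths attempted, together with the two failure criteria (premature termination at a lower-degree node, or a mismatched entry port), is rich enough to distinguish $\cV(v,G_i,k(v,i))$ from any differing truncated view of the same depth; in particular, one must reconcile the fact that $\cV$ is defined via simple paths while {\tt Check} traverses non-backtracking paths, checking that any structural discrepancy between the reference view and the actual view is forced to surface as a detectable failure at some point during the enumeration. Once this structural lemma is in place, the three-step argument above yields the claim.
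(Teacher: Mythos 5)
Your proof is correct and follows essentially the same route as the paper's: termination of {\tt Find Success} via the guaranteed success of {\tt Check}$(v_0,i_0)$, the bound $i_0\le m(v,i)$ obtained from the definition of the depth and range witnesses, and the final call to $R(M)$. The ``delicate point'' you flag --- that success of {\tt Check}$(v,i)$ must genuinely certify equality of the depth-$k(v,i)$ truncated views, despite the mismatch between the simple-path definition of $\cV$ and the non-backtracking paths traversed by {\tt Check} --- is a real subtlety, but the paper's own proof leaves it equally implicit, so your proposal is if anything slightly more careful than the original.
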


\begin{proof}
The {\bf while} loop in Procedure {\tt Find Success} must be exited at some point, i.e., the variable $result$ must be set to $success$. This happens at the latest in the execution of Procedure {\tt Check}  $(v,i)$, where the graph from the family $\cF$ in which the agent is operating is $G_i$ and the starting node of the agent is $v$. Indeed, in this case, the truncated view from the initial position of the agent at any depth $h$ is identical with the truncated view $\cV(v,G_i, h)$. Hence the Procedure {\tt Find Success} terminates and the agent learns  parameters $v$ and $i$ for which the variable $result$ changed from $failure$ to $success$. By the definition of $m(v,i)$, the agent learns that it is  in one of the graphs from the (finite) family $\{G_t : t\leq m(v,i)\}$. It follows that the execution of procedure $R(M)$, where $M$ is the maximum size of all graphs in this family, must result in the exploration of the graph in which the agent operates, regardless of which graph of the family  $\{G_t : t\leq m(v,i)\}$ it is and regardless of which node  of this graph is the starting node of the agent. This proves the lemma.
\end{proof}

Lemmas \ref{nec} and \ref{suf} imply the following characterization result.

\begin{theorem}
A family of graphs is explorable if and only if it satisfies condition $C$.
\end{theorem}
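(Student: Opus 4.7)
The theorem is an if-and-only-if characterization, so the plan is to decompose it into its two directions and match each direction to one of the already-established lemmas. For the forward direction (explorable $\Rightarrow$ condition $C$), I would argue the contrapositive: if $\cF$ fails condition $C$, then by Lemma \ref{nec} $\cF$ is not explorable. For the reverse direction (condition $C$ $\Rightarrow$ explorable), I would invoke Lemma \ref{suf}, which not only shows explorability but in fact exhibits the dedicated algorithm $Explo(\cF)$ witnessing it.

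A small bookkeeping point I would address explicitly before chaining the lemmas: the section restricts attention to infinite families, since finite families are trivially explorable (an upper bound $N$ on sizes is available, and the algorithm $R(N)$ from \cite{Re} suffices). I would briefly note that condition $C$ is vacuously compatible with finite families, or alternatively state that the theorem is proved for infinite $\cF$ and the finite case is already handled in the introduction. Either way, no new argument is needed.

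Given the structure, the proof of the theorem itself contains no new ideas; it is a one-line combination of Lemmas \ref{nec} and \ref{suf}. Consequently, there is no real obstacle at this stage: the substantive work has already been done inside the two lemmas. The conceptually hard part lies earlier, namely in Lemma \ref{suf}, where one must exploit condition $C$ to guarantee that Procedure \texttt{Find Success} terminates with a pair $(v,i)$ whose range witness $m(v,i)$ confines the unknown host graph to a finite subfamily, so that the universal-exploration-sequence routine $R(M)$ can finish the job. Once those two lemmas are in hand, I would simply write: by Lemma \ref{nec}, violation of $C$ precludes explorability, and by Lemma \ref{suf}, satisfaction of $C$ implies explorability via the algorithm $Explo(\cF)$; combining these proves the equivalence.
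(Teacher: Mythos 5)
Your proposal matches the paper exactly: the theorem is stated as an immediate consequence of Lemmas \ref{nec} and \ref{suf}, with the necessity direction handled by the contrapositive in Lemma \ref{nec} and the sufficiency by the algorithm $Explo(\cF)$ in Lemma \ref{suf}. Your aside about finite families being trivially explorable is consistent with the paper's restriction to infinite families and does not change the argument.
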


\section{Universal exploration algorithm}

In this section we use the characterization from Section 2 to investigate the following problem. Does there exist a {\em universal} algorithm
$\cU$, which when given as input an explorable family $\cF$ of graphs, explores any graph of this family, starting from any initial node in it?
Note that Algorithm $Explo(\cF)$ from the preceding section was {\em dedicated} to the exploration of graphs of one particular explorable family $\cF$, i.e., it was supposed to work only for graphs of this family. Consequently, important information about the family, such as functions
$k(v,i)$ and $m(v,i)$, could be included in the text of the algorithm. For a universal algorithm this is not the case: in contrast to dedicated algorithms, a universal algorithm must gain sufficient knowledge about the input explorable family $\cF$, in order to be able to successfully explore any graph of this family.

Here comes the subtle issue of how the input explorable family is given. First recall that we may restrict attention to infinite families, as finite 
families are trivial to explore by applying the procedure $R(N)$ from \cite{Re}, where $N$ is an upper bound on the sizes of all graphs in the family. For a finite family, it can be given to the universal algorithm as a single finite input object, the algorithm can find $N$, apply $R(N)$ and we are done. By contrast, in the case of infinite explorable families $\cF$, the family cannot be given to the hypothetical universal algorithm $\cU$ as a single input object all at once. How then could it be given?

It seems reasonable to assume that the universal algorithm should be able to get knowledge about the input family $\cF$ ``piece by piece'',
i.e., it should be able to get items of information about this family as responses to queries. This idea can be implemented in at least two ways. We start with the more liberal way that intuitively allows the algorithm to get an answer to any yes/no query about the input family $\cF$.
This can be formalized as follows. Consider the set $\cal X$ of all infinite families of finite graphs. Observe that, while each family in $\cal X$ is countable, the set $\cal X$ which is the set of all these families is uncountable, but this fact has no impact on our formalization.
Consider any definable subfamily $\Xi$ of $\cal X$, i.e., a family $\Xi=\{\cG \in {\cal X} : \cG$ satisfies $\Phi\}$, where $\Phi$ is some set-theoretic predicate. The questions that the universal algorithm is allowed to ask are of the form: ``Is the input family $\cF$ an element of $\Xi$?''
This is of course equivalent to asking ``Does the input family $\cF$ satisfy the predicate $\Phi$?'' Such questions can be asked by the universal algorithm, using  all possible predicates $\Phi$, one at a time, and it is assumed that the algorithm will obtain a truthful answer to any such question. This formalization can be thought of as using an oracle that knows everything about the input family but answers only yes or no.
Examples of questions that the algorithm can ask are: ``Are there infinitely many planar graphs in $\cF$?'', ``Is the 17-th graph in the canonical order of $\cF$  a tree? or ``Does there exist a tree in $\cF$?''. 

It should be noted that although the allowed queries are only of yes/no type, they are very powerful, as the universal algorithm may get some information about the entire infinite input family all at once, as in the query ``Does there exist a tree in $\cF$?''.  A negative answer to such a query could not be obtained
by looking at any finite part of $\cF$. We will show that this powerful feature allows us to construct a correct universal exploration algorithm.
Before doing it, we need some preparation.

Let $\{H_i: i \geq 1\}$ be the canonical enumeration of the family $\cal X$. Consider an explorable family $\cF$ that is the input to the universal algorithm that we are going to describe. We first describe the procedure
{\tt Find i-th graph}  that returns the graph that is the $i$-th element in the canonical enumeration of $\cF$. The procedure asks the questions
``Is the $i$-th element in the canonical enumeration of $\cF$ equal to $H_j$?'', for $j=1,2,...$, until the answer yes is obtained, and returns
$H_j$ for which the positive answer is obtained.

The aim of the next two procedures is finding, respectively, the depth witness and the range witness of $(v,i)$, where $v$ is a node in the
$i$-th element in the canonical enumeration of $\cF$. Procedure {\tt Find the depth witness of} $(v,i)$ asks questions ``Is the depth witness of 
$(v,i)$ equal to $j$, for $j=1,2,...$, until the answer yes is obtained, and returns the integer $j$ for which the positive answer is obtained.
Similarly, procedure {\tt Find the range witness of} $(v,i)$ asks questions ``Is the range witness of 
$(v,i)$ equal to $j$, for $j=1,2,...$, until the answer yes is obtained, and returns the integer $j$ for which the positive answer is obtained.

We will now modify the procedure {\tt Check} $(v,i)$ from Section 2 to make it work in the context of the universal algorithm.
The modification, for any $(v,i)$ consists in first applying procedure {\tt Find i-th graph} and then applying procedure {\tt Find the depth witness of} $(v,i)$. Suppose that the first procedure returns the graph $G_i$ and the second procedure returns the integer $k(v,i)$. The rest of procedure  {\tt Check} $(v,i)$ is as in Section 2. Again, the procedure may terminate with success or failure, defined as previously.
Procedure {\tt Find Success} is as before, using the modified version of  {\tt Check} $(v,i)$. We will call it {\tt Universal Find Success}. Now our universal algorithm can be formulated as follows, assuming that the input explorable family given to the oracle that answers queries is $\cF$.

\begin{center}
\fbox{
\begin{minipage}{12cm}

{\bf Algorithm} {\tt Universal Exploration}\\

\vspace*{0.5cm}
{\tt Universal Find Success}\\
$(v,i):=$ parameters returned by procedure {\tt Find Success}\\
{\tt Find range witness of} $(i,v)$\\
Let $m(i,v)$ be the range witness of $(i,v)$\\
{\bf for} $t=1$ {\bf to} $m(i,v)$ {\bf do}\\
\hspace*{1cm}{\tt Find the t-th graph}\\
\hspace*{1cm}Let $G_t$ be the graph returned by procedure {\tt Find the t-th graph}\\
$M:=$ the maximum size of all graphs in the family $\{G_t : t\leq m(v,i)\}$\\
$R(M)$

\end{minipage}
}
\end{center}

The correctness of Algorithm {\tt Universal Exploration} follows from the fact that it correctly finds the depth and range witnesses, as well
as the graphs $\{G_t : t\leq m(v,i)\}$. Other than that, the algorithm works like the dedicated algorithm $\cA(\cF)$, and thus achieves exploration of any graph in the family. Hence we have the following theorem.

\begin{theorem}
Algorithm {\tt Universal Exploration} correctly explores any graph of an explorable family $\cF$, starting at any node of this graph, if this family is given as input to an oracle that can answer all yes/no queries about it.
\end{theorem}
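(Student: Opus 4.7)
The plan is to show that \texttt{Universal Exploration} reduces to the dedicated algorithm $Explo(\cF)$ from Section 2 once the oracle-based subroutines are shown to terminate with the correct outputs. So the first step is to verify each subroutine. The query ``Is the $i$-th element in the canonical enumeration of $\cF$ equal to $H_j$?'' is a yes/no question about $\cF$, hence answered truthfully by the oracle; since the $i$-th element of the infinite family $\cF$ exists and is port-preserving isomorphic to exactly one $H_j$, the loop halts. The queries for $k(v,i)$ and $m(v,i)$ are analogous; since $\cF$ is explorable, condition $C$ holds (by Lemma \ref{nec}) and guarantees that these witnesses exist, so both searches terminate with the correct integers.

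Next I would verify that \texttt{Universal Find Success} terminates with a couple $(v,i)$ such that the actual truncated view at the agent's starting position, of depth $k(v,i)$, equals $\cV(v,G_i,k(v,i))$. The modified \texttt{Check} $(v,i)$ first obtains $G_i$ and $k(v,i)$ from the oracle and then performs exactly the same physical moves as in Section 2: it traverses, in lexicographic order, the non-backtracking paths prescribed by $\cV(v,G_i,k(v,i))$ and returns via their reverses. Its success condition is therefore identical to the one in the dedicated setting. Repeating the argument in the proof of Lemma \ref{suf}, when the outer loop reaches the canonical index $i$ of the agent's host graph and the inner loop reaches the agent's actual starting node $v$, the call \texttt{Check} $(v,i)$ necessarily succeeds, so the procedure terminates.

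For the last step, let $(v,i)$ be the couple returned. By the success of \texttt{Check} $(v,i)$, the truncated view of depth $k(v,i)$ at the agent's actual starting position coincides with $\cV(v,G_i,k(v,i))$. By the definition of the range witness $m(v,i)$, no node in any $G_j$ with $j>m(v,i)$ has this truncated view, so the agent's host graph must be some $G_t$ with $t\leq m(v,i)$. The algorithm then queries the oracle for each of $G_1,\dots,G_{m(v,i)}$, computes the maximum size $M$ among them, and invokes $R(M)$, which explores any graph of size at most $M$ from any starting node and stops, completing the exploration. The main conceptual point worth spelling out is the first one: each oracle query must be expressible as a yes/no set-theoretic predicate about $\cF$. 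Once this is granted, the argument essentially reuses Lemma \ref{suf}, with the only novelty being that the data $G_i$, $k(v,i)$, $m(v,i)$, and the finite list $G_1,\dots,G_{m(v,i)}$ are now acquired by queries rather than hard-coded into the algorithm.
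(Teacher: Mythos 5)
Your proposal is correct and follows essentially the same route as the paper's (much terser) proof: verify that the oracle-based subroutines terminate with the correct graph, depth witness, and range witness, and then observe that the algorithm thereafter behaves exactly like the dedicated algorithm $Explo(\cF)$, so correctness follows from the argument of Lemma \ref{suf}. Your write-up simply spells out the details the paper leaves implicit.
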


The capability of getting an answer to any yes/no query about the input explorable family is very strong. It assumes the existence of an oracle that has a ``magical'' complete insight in this family. It can be argued that such an oracle could not exist in practice, and thus it would be desirable to design a universal exploration algorithm to which input explorable families would be given in a way possible to implement realistically. Here comes the second natural way in which a potential universal algorithm could get information about the input family. Suppose that the input explorable family $\cF$ is recursively enumerable and let $\cE(\cF)$ be the enumeration algorithm. More precisely, the algorithm
$\cE(\cF)$, given a positive integer $i$ as input, returns the $i$-th graph of $\cF$ in the canonical enumeration. The second natural way of providing information about the family $\cF$ to the hypothetical universal exploration algorithm $\cU$ would be to give to $\cU$ the text of algorithm
$\cE(\cF)$ as input. Then the algorithm $\cU$ would be able, for any positive integer $i$, to run $\cE(\cF)$ on $i$ and get the $i$-th graph of $\cF$ in the canonical enumeration, returned by $\cE(\cF)$. For simplicity, we may assume that finding this $i$-th graph is done in one step.
This is reminiscent of the definition of Turing reducibility in which an algorithm $A_1$ reducible to $A_2$ may run $A_2$ on some input and receive the output in one step. In any case, in this paper we are not concerned with efficiency of exploration, only with the feasibility of this task.

We will say that a universal exploration algorithm {\em processes algorithmic input},  if it works as described above. Such an algorithm would be able to explore any graph of any (recursively enumerable) explorable family, given to it in this algorithmic way, without the help of any oracle.
Unfortunately, we have the following negative result.

\begin{theorem}\label{no-univ}
There does not exist a universal exploration algorithm that processes algorithmic input, which correctly explores any graph of any recursively enumerable explorable family.
\end{theorem}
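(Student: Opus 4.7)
My plan is to derive a contradiction using Kleene's recursion theorem, by constructing a self-referential enumeration code $\cE^*$ of an explorable family $\cF^*$ on which $\cU$ is forced to halt before finishing.

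The first ingredient is a simulation. Given any enumeration code $\cE$, I can run $\cU$ on input $\cE$ while answering its port-queries as if the agent sat at a degree-$1$ endpoint $v_0$ of a one-sided infinite path with the canonical port labelling (port $0$ pointing back toward $v_0$, port $1$ pointing forward). If $\cU$ halts in this simulation, let $T(\cE)$ be its halting step; a standard time-budget trick turns this into a total computable map. The second ingredient is that, for each integer $T \geq 1$, the family $\cF_T = \{P_n : n \geq T+2\}$ of all paths of at least $T+3$ vertices is a decidable explorable family (a subfamily of all trees, hence handled by the basic walk mentioned in the introduction), whose canonical enumeration is uniformly computable from $T$. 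Combining the two I obtain a total computable function $F$ that maps any code $\cE$ to a code for the canonical enumeration of $\cF_{T(\cE)}$. By Kleene's recursion theorem there exists a fixed point $\cE^*$ with $\varphi_{\cE^*} = \varphi_{F(\cE^*)}$. Consequently, $\cE^*$ is the canonical enumeration of the explorable family $\cF^* := \cF_{T^*}$, where $T^* = T(\cE^*)$.

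The contradiction is then obtained by placing $\cU$, with input $\cE^*$, at the endpoint $v_0$ of the smallest graph of $\cF^*$, namely $P_{T^*+2}$ (which has $T^*+3$ vertices). In its first $T^*$ steps $\cU$ can visit at most $T^*+1$ vertices of this path, and these vertices together with their local port structure form an initial segment that is indistinguishable from the initial segment of the one-sided infinite path used in the simulation; hence by determinism the real and simulated trajectories of $\cU$ coincide throughout these $T^*$ steps, and $\cU$ really halts at step $T^*$. But then $\cU$ has visited at most $T^*+1 < T^*+3$ vertices of $P_{T^*+2}$, so exploration is incomplete, contradicting the universality of $\cU$ on the explorable family $\cF^*$.

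The main obstacle, and the only delicate point, is ensuring that the fixed-point enumeration $\cE^*$ really is an honest canonical enumeration of an explorable family, and that the map $\cE \mapsto T(\cE)$ is tame enough for the recursion theorem to be applied cleanly. Dovetailing the simulation of $\cU$ over a growing time budget indexed by the input $i$, reverting to a default explorable family of paths whenever the budget is exhausted, and using the uniform decidability of $\cF_T$ in $T$, will all routinely take care of this plumbing; the substantive content is the diagonal self-reference above.
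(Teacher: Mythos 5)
There is a genuine gap, and it sits exactly where you placed your confidence: the claim that the halting time $T(\cE)$ of the simulation of $\cU$ on the one-sided infinite path can be made total ``by a standard time-budget trick''. The infinite path is not a member of any family you ever feed to $\cU$, so nothing forces that simulation to halt, and the dovetailing patch does not rescue the argument. Concretely, consider the fixed point $\cE^*$ in the branch where the simulation of $\cU$ on input $\cE^*$ never halts: your construction then makes $\cE^*$ enumerate the default family of all paths, which is explorable, and a correct $\cU$ on that input is \emph{required} to behave exactly this way. Indeed, exploring $P_n$ from an endpoint forces the agent to reach the far end before it may stop, so its trajectory agrees with the infinite-path simulation for roughly $n$ steps without halting, for every $n$; hence non-halting of the simulation is precisely what correctness predicts. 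A universal algorithm that, for instance, performs the basic tree walk whenever the queried graphs are paths defeats your diagonalization: $T$ is undefined on every code the recursion theorem can hand you, you always land in the no-contradiction branch, and no graph is ever mis-explored. The deeper obstruction is your choice of the families $\cF_T=\{P_n: n\ge T+2\}$: a path explored from an endpoint is completely determined by the view the agent must acquire before it may legally stop, so an explorable family of paths contains no larger graph indistinguishable from the smallest one up to the stopping time, and the ``fool the agent with a bigger graph'' step has nothing to work with.

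The paper's proof avoids both problems by simulating $\cU$ on a \emph{finite} graph $G_1$ (a triangle with a pendant node) that belongs to the input family, so correctness forces a halt at some step $k$ after queries to finitely many indices $j\le r$; it then builds a second explorable family $\cF^*$ agreeing with the first on those $r$ graphs but containing a large graph $D_m$ (a long ring with pendants and a single degree-$4$ marker) whose truncated view from a suitable node coincides with that of $G_1$ to depth $k$, while $\cF^*$ remains explorable because the marker is detectable within a number of steps depending only on $r$. If you want to salvage your argument, the simulation environment must in every branch be a finite graph belonging to the enumerated family, and the families must be built from such locally periodic graphs rather than paths. Your use of Kleene's recursion theorem is, however, a genuine improvement in one respect worth keeping: it handles a universal algorithm that inspects the \emph{text} of the enumeration code rather than merely querying it as a black box, a case that the paper's two-input comparison of $\cE(\cF)$ and $\cE(\cF^*)$ silently assumes away.
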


\begin{proof}
Suppose that there exists a universal exploration algorithm $\cU$ that processes algorithmic input, which correctly explores any graph of any recursively enumerable explorable family. Denote by $R_k$, for $k\geq 3$, the ring of size $k$ with ports at all nodes numbered 0,1 in the clockwise order. Let $C_k$ denote the graph resulting from the ring $R_k$ by attaching one node of degree 1 to one of the nodes of $R_k$. By definition, the edge joining the single node $v$ of degree 3 with the single node of degree 1 corresponds to port number 2 at $v$.
 In $C_k$ we will say that an agent goes clockwise if it leaves a node by port 1. For $i \geq 1$, let $G_i$ be the graph $C_{i+2}$, and consider the family $\cF=\{G_i : i\geq 1\}$ of graphs. Since the sizes of graphs $G_i$ are strictly increasing, this is the canonical enumeration of $\cF$.

We first observe that the family $\cF$ is explorable. The (dedicated) exploration algorithm $\cA(\cF)$ can be simply formulated as follows.

If the starting node is of degree 1 then take port 0, go clockwise until getting to a node of degree 3 and stop.\\
If the starting node is of degree 3 then take port 2, take port 0, go clockwise until getting to a node of degree 3 and stop.\\
If the starting node is of degree 2 then go clockwise until the second visit at a node of degree 3, take port 2 and stop.

Suppose that $\cE(\cF)$ is an enumeration algorithm corresponding to $\cF$, given to $\cU$ as input. Consider the execution $E_1$ of
$\cU$ with this input, where the agent is initially placed at the single node $v$ of degree 3 in graph $G_1$. Suppose that in execution $E_1$ the agent stops after $k$ steps. Let $r$ be the largest integer for which algorithm $\cE(\cF)$ was called in the execution $E_1$ of $\cU$.

For any positive integer $j$, define the following graph $D_j$ (cf. Fig. \ref{dessin} at the end of the paper). Consider the ring $R_{6j}$. Attach to every third node of $R_{6j}$ a distinct node of degree 1. Finally, to one of the $2j$ resulting nodes of degree 3 attach another node of degree 1. Thus graph $D_j$ has 1 node of degree 4 and $2j-1$ nodes of degree 3, partitioning the ring $R_{6j}$ from which the construction started into $2j$ segments of length 3. Moreover, $D_j$ has $2j+1$ nodes of degree 1. The port at each node of degree 3 corresponding to the edge joining it to a node of degree 1 has number 2.
The ports at the single node of degree 4 corresponding to the edges joining it to nodes of degree 1 have numbers 2 and 3.

We now define the graphs $H_i$, for $i \geq 1$, as follows. For $1\leq i\leq r$, let $H_i=G_i$; for $i>r$, let $H_i=D_i$. Finally, we consider the family $\cF^*=\{H_i : i\geq 1\}$ of graphs. Since the sizes of graphs $H_i$ are strictly increasing, this is the canonical enumeration of $\cF^*$.
We show that the family  $\cF^*$ is explorable. In order to formulate the (dedicated) algorithm $\cA(\cF^*)$, we first describe the following procedure that explores any graph $D_j$ starting from a node of degree 3. (For this purpose, 
repeating until the second visit at a node of degree 4 would be enough, but we need the third visit to make it work within algorithm $\cA(\cF^*)$).

\begin{center}
\fbox{
\begin{minipage}{9.5cm}

{\bf Procedure} {\tt Go around}

\vspace*{0.5cm}

{\bf repeat}\\
\hspace*{1cm}take port 1, take port 1, take port 2, take port 0\\
{\bf until} the third visit at a node of degree 4\\
take port 3 and stop.

\end{minipage}
}
\end{center}

Now algorithm $\cA(\cF^*)$ can be described as follows. Its high level idea is to go clockwise around the ring in any graph $H_j$, sufficiently long to see if $j$ is at  most $r$ or larger than $r$. In the first case, algorithm $\cA(\cF)$ is applied because the graph is $G_j$, and in the second case procedure {\tt Go around} is used to terminate exploration because the graph is $D_j$.

\begin{center}
\fbox{
\begin{minipage}{11cm}

{\bf Algorithm} $\cA(\cF^*)$

\vspace*{0.5cm}

{\bf if} the starting node is of degree 1 {\bf then} take port 0\\ 
Go clockwise for $r+1$ steps.\\ 
{\bf if} no node of degree 4 is visited {\bf then} apply the algorithm $\cA(\cF)$\\
{\bf else}\\ 
\hspace*{1cm}go clockwise to the closest node of degree 3\\ 
\hspace*{1cm}{\tt Go around}

\end{minipage}
}
\end{center}

Let $m=\max(k,r)+1$ and suppose that $\cE(\cF^*)$ is an enumeration algorithm corresponding to $\cF^*$, given to $\cU$ as input.
Consider the execution $E_2$ of algorithm $\cU$ with this input, where the agent is initially placed at the node $w$ antipodal to the unique node of degree 4 in the graph $H_m=D_m$. Consider the first $k$ steps of this execution. Observe that $\cV(v,G_1,k)=\cV(w,H_m,k)$, by construction of graphs $G_i$ and $H_i$. By induction on the step number, in the first $k$ steps of execution $E_2$, algorithm $\cU$ calls the input enumeration algorithm for exactly the same integers as it does in execution $E_1$. For execution $E_1$  the input algorithm is 
$\cE(\cF)$ and for execution $E_2$ it is $\cE(\cF^*)$, but these integers are at most $r$ and the first $r$ graphs in the canonical enumeration of $\cF$ and of $\cF^*$ are identical. Hence the returned graphs in these calls in both executions are identical. It follows that the $k$ steps in both executions are identical. Since the agent stops after the $k$-th step of execution $E_1$, it must also stop after the $k$-th step of execution $E_2$. However, in execution $E_2$ it cannot explore the graph $H_m=D_m$ because this graph has more than $k+1$ nodes. This concludes the proof.
\end{proof}

One could hope for salvaging the idea of a universal exploration algorithm processing algorithmic input by restricting the class of input explorable families from recursively enumerable (as we did above)
to recursive. In this case the hypothetical universal algorithm could be given as input the {\em decision} algorithm that answers, for any graph $G$, if this graph belongs to the family $\cF$ of graphs that should be explored. Similarly as before, the hypothetical universal algorithm could run the decision algorithm on any graph of its choice and learn in one step if this graph belongs to the recursive family of graphs that should be explored. However, it is easy to see that even with this restriction our negative result still holds. Indeed, it is enough to modify the proof of Theorem \ref{no-univ} by defining $r$ to be the {\em largest size of a graph} for which the decision algorithm was called in the execution $E_1$. The rest of the proof remains unchanged.

\section{Conclusion}

We gave a characterization of explorable families of graphs, and provided, for any such family, a {\em dedicated} exploration algorithm that explores any graph of the family starting from any node. Then we studied the issue of the existence of a deterministic universal exploration algorithm that, for any explorable family given as input,  would explore any graph of the family starting from any node. Since such families may be infinite, it has to be made precise how would they be given to the hypothetical universal algorithm. We showed that a very liberal approach to this issue of providing the input, namely the assumption of an oracle that can answer any yes/no query asked by the universal algorithm about the input family, permits us to construct such a universal algorithm. This approach, however, is arguably unrealistic.
Hence we defined the  way of presenting the input family to the hypothetical universal algorithm in a more restrictive but more realistic way: by giving the universal algorithm either the enumeration
algorithm of the input family of graphs, in the case when this family is recursively enumerable, or giving it the decision algorithm for the input family, if the latter is recursive. We showed that this more realistic way of presenting the input explorable family precludes the existence of a deterministic universal exploration algorithm.

A similar idea to that in the proof of Theorem \ref{no-univ} can be used to prove that the problem of explorability of a family of graphs is undecidable in the following sense. There does not exist a decision algorithm that, given as input an enumeration algorithm of a recursively enumerable family $\cG$ of graphs, can decide whether the family $\cG$ is explorable. 

In this paper we concentrated on the issue of feasibility of exploration for families of graphs, rather than on the efficiency of exploration. An open, probably very challenging problem yielded by our study is to find, for any explorable family of graphs, a dedicated algorithm which would explore any graph of this family in optimal time.

\end{document}